\date{}
\newtheorem{theorem}{Theorem} %pag 128 
\newcommand{\mail}[1]{\href{unina:#1}{\texttt{#1}}}
\author{Monica De Angelis\thanks{University of Naples  "Federico II", Dep. Mat. Appl. "R.Caccioppoli", \newline
 Via Claudio n.21, 80125, Naples, Italy.
\newline\mail{modeange@unina.it}}}
\title {Asymptotic estimates  related to  an  integro differential equation}
\begin{document}
\maketitle

\begin{abstract}
 The paper deals with an   integrodifferential operator   which  models   numerous phenomena  in superconductivity, in biology and in viscoelasticity. Initial-boundary value problems with Neumann, Dirichlet and mixed boundary conditions are analyzed. An  asymptotic analysis    is  achieved proving that for large t, the influences of the initial data vanish, while the effects of boundary disturbances are everywhere bounded.

\vspace{5mm}{{Keywords}}:{Initial- boundary problems for higher order parabolic equations;\, Laplace  transform;\, Superconductivity;\,FitzHugh Nagumo model.}

\vspace{5mm} \textbf{Mathematics Subject Classification (2000)}44A10,\,35K57,\,35A08,\,35K35
\end{abstract}

%\subclass{35K47\and 35K25 \and 78A70\and 35E05\and 44A10  }

\vspace{13mm}

%\marginpar{laplace2010}
\bigskip
\section{Introduction}

If $ u=u(x,t), $ let us consider  the  following  integrodifferential equation

\begin{equation}   \label{11}
%\left \{
  % \begin{array}{lll}
  \mathcal L u \equiv \,\, u_t -  \varepsilon  u_{xx} + au +b \int^t_0  e^{- \beta (t-\tau)}\, u(x,\tau) \, d\tau \,=\, F(x,t,u) \, 
  \end{equation}  
  
  \noindent where $ \varepsilon, a, b, \beta  $ are positive constants, $ x $ denotes the  direction of propagation and $ t $ is the time. According to the meaning of $ F(x,t,u)$, equation (\ref{11})  describes   the evolution of several   linear or non linear  physical models. For instance, when $ F=f(x,t), $ (\ref{11}) is related to the following linear phenomena:
  
\begin{itemize}
\item motions of viscoelastic fluids or solids  \cite{bcf,dr1,dmr,mor};
\item heat conduction at low temperature \cite{mps,fr,bs},
\item sound propagation in viscous gases \cite{l}.\end{itemize}   

 \noindent When $ F=F(x,t,u), $ some  non linear phenomena involve equation (\ref{11}) both in superconductivity and  biology.

\vspace{3mm}
 $ \bullet $  {\em Superconductivity} --  Let  $\, u\, $ be the difference between the wave functions phases of  two superconductors in a  Josephson junction. The equation describing   tunnel effects is the following one:   

\begin{equation}  \label{122}
\varepsilon u_{xxt}\, - \, u_{tt} \, +\, u_{xx}-\, \alpha u_t = \,  \, \sin u \ - \gamma   
\end{equation}

\noindent where constant $\, \gamma \, $ is a forcing term  proportional 
to a bias current,   while  the  $ \varepsilon -term$  and the $\, \alpha -term $ account for  the dissipative normal electron current  flow, respectively along and across the junction  \cite{bp,df13}.

Equation (\ref{122})  can be obtained by (\ref{11})  as soon as one assumes

\begin{equation}   \label{133}
 a \,=\,  \alpha \, - \dfrac{1}{\varepsilon} \, \,\quad\,\, b = \,  - \, \dfrac{a}{\varepsilon}  \,\,\quad \displaystyle \, \beta \,= \dfrac{1}{\varepsilon}\,\,
  \end{equation}
  
  \noindent  and $ F  $ is such that

\begin{equation}   \label{144}
F(x,t,u)\,=\, -\, \int _0^t \, e^{\,-\,\frac{1}{\varepsilon}\,(t-\tau\,)}\,\,[\, \, sen \, u (x, \tau)\,- \gamma\, \,]\, \, d\tau. 
\end{equation}

\noindent  Besides, when the case of an exponentially shaped Josephson junction (ESJJ)  is considered, the evolution of the phase inside this junction is described  by the third order equation:

\begin{equation}  \label{12}
(\partial_{xx} \, - \,\lambda\,  \partial _x\,)\,\,(\varepsilon
u _{t}+ u) - \partial_t(u_{t}+\alpha\,u)\, = \,  \, \sin u \ - \gamma   
\end{equation}

\noindent where  $ \lambda $ is a positive constant generally less than one    and  the terms $ \lambda u_{xt} $           and  $ \lambda u_{x} $ represent the current due to the  tapering junction. In particular  $ \lambda u_{x} $ corresponds  to a geometrical force driving the fluxons from the wide edge to the narrow edge.   
  \cite{df13,df213,bcs}  An   (ESJJ)   provides several advantages with respect to a rectangular junction (\cite{mda13} and reference  therein). For instance, in \cite{bcs} it has been proved that  it is possible to obtain  a voltage which is not  chaotic anymore, but rather periodic  excluding, in this way,   some among the possible causes of large spectral width. It is also proved that  the problem of trapped flux can be avoided. Numerous applications  and devices  involve Josephson  junctions,  for  example SQUIDs   which are very versatile and  can be used in  a lot of  fields. (see f.i.\cite{cd} and references  therein).
  
 \vspace{3mm} Moreover, if  $\,\,u= e^{\lambda\,x/2 \,}\,\,\overline{u},\,\, $  (\ref{12}) turns into an equation like  (\ref{122}) and  hence into  (\ref{11}).

\vspace{3mm}
$ \bullet $ {\em Biology} -- 
 Let us consider the FitzHugh-Nagumo system  (FHN) which models the propagation of nerve impulses. \cite{m1}:

 \begin{equation}     \label{15}
  \left \{
   \begin{array}{lll}
    \displaystyle{\frac{\partial \,u }{\partial \,t }} =\,  \varepsilon \,\frac{\partial^2 \,u }{\partial \,x^2 }
     \,-\, v\,\,  + f(u ) \,  \\
\\
\displaystyle{\frac{\partial \,v }{\partial \,t } }\, = \, b\, u\,
- \beta\, v\,.
\\

   \end{array}
  \right.
 \end{equation}

\noindent Here,   $\, u\,( \,x,t\,)$  models the transmembrane voltage  of a nerve axon at  a distance x and  time t, while  $\, v\,(\,x,t\,)$ is an auxiliary variable   acting  as  a recovery variable. 
Besides, the  function $\,f (u)\,$  has the qualitative form of a cubic polynomial

\begin{equation}      \label{16}
f(u)\, =\,-\,a\,u\, +\,\varphi(u) \quad with \quad  \varphi \,=\, u^2\, (\,a+1\,-u\,),
\end{equation}

\noindent while  $ \varepsilon,\, b,\, \beta\, $ are non negative and  the  parameter $ a, $  representing the threshold constant, is generally   $ \,0<a<1.$ (see f.i. \cite{rda08} and references therein)

Denoting by  $\,v_0  \,$  the initial value of v,  system (\ref{15}) (\ref{16}) can be given the form of the integrodifferential equation  (\ref{11})   as soon as one puts: 

\begin{equation}     \label{51}
F(x,t,u)\, =\,\varphi (u) \, -\, v_0(x) \, e^{\,-\,\beta\,t\,}.
\end{equation} 
%\end{itemize}

\vspace{3mm} In this paper, initial value problems with  Neumann,  Dirichlet and mixed boundary conditions for (\ref{11}) are  considered. By means of properties of  the fundamental solution $ K_0(x,t,)$ of the operator $ {\cal L} ,$   appropriate  estimates are obtained.     The function  $ K_0(x,t)$    has already been determined and analyzed 
in \cite{dr8} and  an analysis related to a Neumann boundary problem has been conducted   in
\cite{dr13}. Aim of this paper  is an asymptotic analysis  for the  initial boundary  value problem  both with Dirichlet conditions and with mixed conditions. These cases involve  x-derivative of   theta functions $ \theta(x,t) $ and $ \theta^*(x,t) $ which are determined in sec (\ref{sec3}). So, effects of boundary perturbations can be  evaluated by means of a well known theorem on asymptotic behavior of convolutions.  As  an example,  according to the  equivalence between operator $ \mathcal{L} $  and  the FHN system,  an  estimate of the  solution related  to the reaction-diffusion system (\ref{15}) is  obtained proving that,  for large $ t, $ effects determined by boundary disturbance  are bounded.   

\section{Some models of superconductivity and biology}

Let  $\, T\, $  be   an arbitrary positive constant and 

\[
\,   \Omega_T \, \equiv \{\,(x,t) : \, 0\,\leq \,x \,\leq L \,\,;  \ 0 < t \leq T. \, \]

%\begin{enumerate}}
%\item 
 \vspace{2mm} (I) A first example is related to {\it Neumann} boundary conditions (NBC)
%\end{enumerate}

\begin{equation}   \label{21}
\left \{
   \begin{array}{lll}
{  \cal L}\,u\,  \,=\, F(x,t,u) \, & (x,t) \in \Omega_T \,  \\    
\\  \,u (x,0)\, = u_0(x)\, \,\,\, &
x\, \in [0,L], 
\\
\\
  \, u_x(0,t)\,=\,\psi_1(t)  \qquad u_x(L,t)\,=\,\psi_2(t) & 0<t\leq T.
   \end{array}
  \right.
\end{equation}

\noindent In superconductivity, this problem occurs 
when the magnetic field, proportional to the phase gradient, is assigned. \cite{j,ddf}. In mathematical biology, it can refer to   a two-species reaction diffusion system subjected to flux boundary conditions \cite{m1}.  The same conditions  are present in case of pacemakers \cite{ks} and are applied also to study distributed (FHN) systems \cite{ns} or to solve FHN systems by means of  numerical calculations \cite{d}.

%\begin{enumerate}
%\item 

 \vspace{3mm}(II) Another example concerns {\it Dirichlet} boundary conditions  (DBC)
%\end{enumerate}

 \begin{equation}   \label{22}
\left \{
   \begin{array}{lll}
 {  \cal L}\,u\,  \,=\, F(x,t,u) \, & (x,t) \in \Omega_T \,  \\     \\
  \,u (x,0)\, = u_0(x)\, \,\,\, &
x\, \in [0,L], 
\\\\
  \,u(0,t)\,=\,g_1(t)  \qquad u(L,t)\,=\,g_2(t)    & 0<t\leq T.
   \end{array}
  \right.
\end{equation}

 \noindent In superconductivity,  $(\ref{22})_3$  refer to the phase boundary specifications\cite{df213,df13,mda13}. In excitable systems these conditions  occur when  the behavior  of a single dendrite has to be  determined and  the voltage level is fixed\cite{ks} or when  the  pulse propagation in
a continuum of heart cells is studied \cite{a,ks}. Besides, the Dirichlet problem  is also considered to determine  universal attractors both for Hodgkin-Huxley equations and for  FHN systems,\cite{m} and for stability  analysis and asymptotic behavior    of reaction-diffusion systems   solutions,   \cite{dm13,t,ccd,ra,mda14}, or  in hyperbolic diffusion \cite{gs}.

\vspace{3mm}(III) At last, {\it mixed} boundary conditions   (MBC) as

 \begin{equation}   \label{23}
\left \{
   \begin{array}{lll}
   {  \cal L}\,u\,  \,=\, F(x,t,u) \, & (x,t) \in \Omega_T \,  \\     
\\
  \,u (x,0)\, = u_0(x)\, \,\,\, &
x\, \in [0,L], 
\\
\\ 
  \,u(0,t)\,=\,h_1(t)  \qquad u_x(L,t)\,=\,h_2(t)    & 0<t\leq T,
   \end{array}
  \right.
\end{equation} 
 
 \noindent  occur in  many physical examples both in superconductivity (see,f.i.\cite{bv} and references therein) and in biology,  as shown in \cite{m1,ks}. In particular, in \cite{rs}, mixed boundary conditions are considered in order to  give qualitative information concerning both the threshold
problem and the asymptotic behavior of large solutions for the  FHN system.

\vspace{5mm} When   $ F\,= f(x,t) $ is a linear function,  problems  (\ref{21})-(\ref{23})   can  be solved by Laplace transformation  with respect to $ t. $

\noindent Let $ z(x,t) $ be  an arbitrary function  admitting Laplace transform $ \hat z(x,s) $

 \begin{equation}   \label{24}
\hat z (x,s) \, = \int_ 0^\infty \, e^{-st} \, z(x,t) \,dt \,\,= \mathcal{L}_t\,z
\end{equation}

\noindent Referring to the parameters $ a, \,\beta,\, b, \,\,\varepsilon\,\, $ of the operator $ \mathcal{L},  $   if

\begin{equation} \label{25}
 \sigma^2 \ \,=\, s\, +\, a \, + \, \frac{b}{s+\beta},\,\quad \,\,\displaystyle\,\tilde{\sigma}^2\,=\, \sigma^2/{\varepsilon,}\,\,\end{equation}
 
 \noindent    we denote by $ \theta(x,s) $ and $ \theta ^* (x,s) $ the following Laplace  transforms:

\begin{equation}\,  \label{26}
\displaystyle
\hat \theta \,(\,y,\tilde\sigma)\,= \,\dfrac{\cosh\,[\, \tilde\sigma \,\,(L-y)\,]}{\,2\, \, {\varepsilon} \,\,\tilde \sigma\,\,\, \sinh\, (\,\tilde \sigma \,L\,)}\,\,=
\end{equation}

\[ =\,  \frac{1}{2 \,\, \sqrt\varepsilon \,\,\,\sigma  } \, \biggl\{\, e^{- \frac{y}{\sqrt \varepsilon} \,\,\sigma}+\, \sum_{n=1}^\infty \,\, \biggl[ \,e^{- \frac{2nL+y}{\sqrt \varepsilon} \,\,\sigma} \, +\, e^{- \frac{2nL-y}{\sqrt \varepsilon} \,\,\sigma}\,
\biggr] \, \biggr\},    \]

\begin{equation}\,  \label{27}
\displaystyle
\hat \theta^* \,(\,y,\tilde\sigma)\,= \,\dfrac{\sinh\,[\, \tilde\sigma \,\,(L-y)\,]}{\,2\, \, {\varepsilon} \,\,\tilde \sigma\,\,\, \cosh\, (\,\tilde \sigma \,L\,)}\,\,=
\end{equation}

\[ =\,  \frac{1}{2  \sqrt\varepsilon \sigma  } \, \biggl\{     e^{- \frac{y}{\sqrt \varepsilon}\sigma}+\,\, 2\,\,\sum_{n=1}^\infty \,\, \biggl( \,e^{- \frac{4nL+y}{\sqrt \varepsilon} \,\,\sigma} \, +\, e^{- \frac{4nL-y}{\sqrt \varepsilon} \,\,\sigma}\,
\biggr)  \,    -  \sum_{n=1}^\infty \,\, \biggl( \,e^{- \frac{2nL+y}{\sqrt \varepsilon} \,\,\sigma} \, +\, e^{- \frac{2nL-y}{\sqrt \varepsilon} \,\,\sigma}\,
\biggr)   \biggr\}.    \]

\vspace{3mm}\noindent Then, the Laplace transform solutions  of the linear problems (\ref{21})-(\ref{23})
 can be obtained by means of standard techniques   and it results:

\vspace{3mm}

$ \bullet $  
Formal solution for  initial boundary problem with (NBC)

\begin{equation}     \label{28}
\begin{split}
\hat u (x,s) = &\,\int _0^L \, [\,\hat \theta\,(\,|x-\xi|, \,s\,)\,+\,\,\,\hat \theta\,(\,|x+\xi|,\, s\,)\,] \, \,[\,u_0(\,\xi\,) \,+\,\hat f(\,\xi,s)\,]\,d\xi\,
\\ 
  \,&  -\,\,\ 2 \,\,\varepsilon \, \,\hat \psi_1 \,(s) \,\, \hat  \theta (x,s)\,+ \, 2 \,\, \varepsilon  \,\, \hat \psi_2 \, (s)\,\,\hat  \theta \,(x-L,s\,).\, \,\,
\end{split}
\end{equation}

$ \bullet $ 
Formal solution for  (DBC)

\begin{equation}     \label{29}
\begin{split}
\hat u (x,s) = &\,\int _0^L \, [\,\hat \theta\,(\,|x-\xi|, \,s\,)\,-\,\,\,\hat \theta\,(\,x+\xi,\, s\,)\,] \, \,[\,u_0(\,\xi\,) \,+\,\hat f(\,\xi,s)\,]\,d\xi\, -
\\ 
  \, & -\,\,\ 2 \,\,\varepsilon \, \,\hat g_1 \,(s) \,\,    \hat\theta_x (x,s)\,+ \, 2 \,\, \varepsilon  \,\, \hat g_2 \, (s)\,\,\,\hat  \theta_x \,(x-L,s\,).\, \,\,
\end{split}
\end{equation}

$ \bullet $    Formal solution  for (MBC)

\begin{equation}     \label{210}
\begin{split}
\hat u (x,s) = &\,\int _0^L \, [\,\hat \theta^*\,(\,x+\xi, \,s\,)\,-\,\,\,\hat \theta^*\,(\,|x-\xi|,\, s\,)\,] \, \,[\,u_0(\,\xi\,) \,+\,\hat f(\,\xi,s)\,]\,d\xi\, +
\\ 
  \,& \,-\,\ 2 \,\,\varepsilon \,  \,\hat h_1 \,(s) \,\, \,\,\hat  \theta^*_x (x,s)\,+ \, 2 \,\,\varepsilon  \hat h _2 \, (s)\,\,\hat  \theta^* \,(L-x,s\,) . \,\,
\end{split}
\end{equation}

\section{  $K_0 (x,t)$  and $ \theta(x,t) $ properties} \label{sec3}

 The Neumann boundary value  problem has  already been  solved in 
\cite{dr13}. Let us consider now  cases (II) and (III).

  Let $ K_0(x,t)  $ be  the fundamental solution of the linear operator  $\, \cal L \,$ defined in (\ref{11}). It has already been determined in \cite{dr8} and one has:

\begin{equation}  \label{31}
K_0(r,t)=  \frac{1}{2 \sqrt{\pi  \varepsilon } }\biggl[ \frac{ e^{- \frac{r^2 }{4 t}-a\,t}}{\sqrt t}-\,\sqrt b \int^t_0  \frac{e^{- \frac{r^2}{4 y}\,- ay}}{\sqrt{t-y}} \, e^{-\beta (\, t \,-\,y\,)}  J_1 (2 \sqrt{\,by\,(t-y)\,})\,\,dy \biggr]
\end{equation} 

\noindent where $\, r \, = |x| \, / \sqrt \varepsilon \, \, $  and  $ J_n (z) \,$  is  the Bessel function of first kind. Function   $ K_0 \, $  has the same  basic properties of the fundamental solution of the heat equation, and  in the half-plane $ \Re e  \,s > \,max(\,-\,a ,\,-\beta\,)\,$ it results:

\begin{equation}      \label{32}
\,{\cal L  }_t\,\,K_0\,\equiv \,\,\int_ 0^\infty e^{-st} \,\, K_0\,(r,t) \,\,dt \,\,=  \,
 \frac{e^{- \,r\,\sigma}}{2 \, \sqrt\varepsilon \,\sigma \,  } 
\end{equation}

\noindent  where $ \sigma  $   is defined in  $(\ref{25})_1$.

   \vspace{3mm} Among other properties, in \cite{dr8} the following estimates have been proved:

\begin{equation}               \label{33}
\int_\Re|\,K_0(x-\xi,t)|d\xi\leq  e^{\,-\,at} +\, \sqrt b\, \pi t  e^{\,-\,\omega \, t } \quad \int_0^t\,d \tau\, \int_\Re |K_0(x-\xi,t)| \, \,d\xi \leq \,  \beta_0 
\end{equation} 
\begin{equation}               \label{34}
|K_0| \, \leq \, \frac{e^{- \frac{r^2}{4 t}\,}}{2\,\sqrt{\pi \varepsilon t}} \,\, [ \, e^{\,-\,at}\, +\, b t \,E(t)\, ] 
\end{equation}

\vspace{3mm}\noindent where  constants $ \omega, \,\,\beta_0\,\, \mbox {and}\,\, E(t) $ are given by:

\begin{equation}      \label{35} 
  \omega = min(a,\beta),\,\,\qquad   \beta _0 =\,\, \frac{1}{a}\, +\, \pi \sqrt b \, \, \displaystyle {\frac{a+\beta}{2(a\beta)^{3/2}}},\end{equation}
   \[
 E(t) \,=\, \frac{e^{\,-\,\beta t}\,-\,e^{\,-\,at}}{a\,-\,\beta}\,\,>0.\,\]

\noindent Moreover,  denoting by

\begin{equation}     \label {36}
 K_i( r,t) \, = \,\,\int^t_0 \,\,e^{-\,\beta \,(\,t-\tau)\,}\,K_{i-1}\,(x,\tau\,) \, d \tau\,\, \qquad ( i=1,2) 
 \end{equation}

 \noindent  kernels $ K_1(x,t)  $ and $ K_2(x,t)  $ have the same  properties of $ K_0(x,t).  $ Hence,  the following theorem holds \cite{dr8}:

\begin{theorem}

For all the positive constants $ a, \,b,\,\, \varepsilon,\, \beta $  it results:

\begin{equation}   \label{37}
\int_\Re |K_1| \, \ d\xi \leq \, \,E(t);\,  \qquad  \int _0^t\\d\tau \,\int_\Re |K_1| \, \ d\xi \leq \, \beta_1\,
\end{equation}

\begin{equation}   \label{38}
\int_\Re \left|K_2 (x-\xi,t)\right| \, d\xi \, \leq \, t\, E(t)
\end{equation}

 \noindent where $   \beta_1\,=\, ({a\,\beta})^{\,-1}.\,$ 

\end{theorem}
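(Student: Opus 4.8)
My plan is to reduce all three estimates to one-dimensional time integrals by exploiting the convolution structure of the recursion (\ref{36}). First I would note that, since all integrands are nonnegative, the triangle inequality together with Tonelli's theorem gives, for $i=1,2$,
\[ \int_\Re |K_i(x-\xi,t)|\,d\xi \,\le\, \int_0^t e^{-\beta(t-\tau)}\Big(\int_\Re |K_{i-1}(x-\xi,\tau)|\,d\xi\Big)\,d\tau . \]
Inserting the crude bound (\ref{33}) for $K_0$ here would leave a spurious $\sqrt b\,$-term and only yield something larger than $E(t)$; so instead I would first turn the defining convolution into a closed expression. The hard part of the whole argument is precisely this step.

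To get the representation for $K_1$, I would substitute the explicit formula (\ref{31}) into (\ref{36}) with $i=1$, merge the exponentials via $e^{-\beta(t-\tau)}e^{-\beta(\tau-y)}=e^{-\beta(t-y)}$, and interchange the $y$- and $\tau$-integrations (legitimate by absolute convergence). In the resulting inner integral the substitution $z=2\sqrt{by(\tau-y)}$ together with the primitive identity $\int_0^Z J_1(z)\,dz = 1-J_0(Z)$ makes the ``heat'' term of (\ref{31}) recombine with the Bessel term, leaving the clean kernel
\[ K_1(r,t)=\frac{1}{2\sqrt{\pi\varepsilon}}\int_0^t \frac{e^{-ay-\beta(t-y)}}{\sqrt y}\,e^{-r^{2}/4y}\,J_0\big(2\sqrt{by(t-y)}\big)\,dy . \]
Integrating $|K_1|$ in $\xi$, using $\int_\Re e^{-r^{2}/4y}\,d\xi = 2\sqrt{\pi\varepsilon y}$ (so the prefactor cancels exactly) and $|J_0|\le 1$, I would obtain $\int_\Re|K_1|\,d\xi \le \int_0^t e^{-ay-\beta(t-y)}\,dy = E(t)$, which is $(\ref{37})_1$. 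For $(\ref{37})_2$ I would integrate this in $t$ and use $E\ge0$ with $\int_0^\infty E(\tau)\,d\tau = (1/\beta-1/a)/(a-\beta)=(a\beta)^{-1}$, so the increasing function $\int_0^t E(\tau)\,d\tau$ stays below $\beta_1$.

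For (\ref{38}) I would repeat the same manipulation starting from the $K_1$-representation just found; now the inner integral $\int_0^{t-y}J_0\big(2\sqrt{byw}\big)\,dw$ is evaluated through $\int_0^Z z\,J_0(z)\,dz = Z\,J_1(Z)$, which yields
\[ K_2(r,t)=\frac{1}{2\sqrt{\pi\varepsilon b}}\int_0^t \frac{\sqrt{t-y}}{y}\,e^{-ay-\beta(t-y)}\,e^{-r^{2}/4y}\,J_1\big(2\sqrt{by(t-y)}\big)\,dy . \]
Integrating in $\xi$ and applying the elementary bound $|J_1(z)|\le z/2$ (with $z=2\sqrt{by(t-y)}$) cancels the $\sqrt b$ and one power of $\sqrt y$, and finally $t-y\le t$ gives
\[ \int_\Re|K_2|\,d\xi \,\le\, \int_0^t (t-y)\,e^{-ay-\beta(t-y)}\,dy \,\le\, t\int_0^t e^{-ay-\beta(t-y)}\,dy \,=\, t\,E(t), \]
which is (\ref{38}).

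I expect the only genuine obstacle to be the derivation of the two closed kernels: justifying the interchange of the order of integration and recognizing the Bessel primitives $\int_0^Z J_1 = 1-J_0$ and $\int_0^Z z\,J_0 = Z\,J_1$. Once these are in place, the passage to the $L^1$ bounds is routine, relying only on the Gaussian normalization $\int_\Re e^{-r^{2}/4y}\,d\xi=2\sqrt{\pi\varepsilon y}$, the uniform estimates $|J_0|\le1$ and $|J_1(z)|\le z/2$, and the elementary inequalities $E\ge0$, $t-y\le t$, and $e^{s}\ge 1+s$.
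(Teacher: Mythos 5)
Your proof is correct, but there is nothing in the paper to compare it against: the paper states this theorem without proof, quoting it from \cite{dr8}, so your argument must be judged on its own merits, and it holds up. The central step --- converting the defining convolutions (\ref{36}) into closed kernels via the Bessel primitives $\int_0^Z J_1(z)\,dz = 1-J_0(Z)$ and $\int_0^Z z\,J_0(z)\,dz = Z\,J_1(Z)$ --- is carried out correctly: after merging the exponentials and exchanging the order of integration, the ``$1$'' coming from $\int_0^Z J_1$ cancels the heat term of (\ref{31}) exactly, giving
\begin{equation*}
K_1(r,t)=\frac{1}{2\sqrt{\pi\varepsilon}}\int_0^t \frac{e^{-ay-\beta(t-y)}}{\sqrt y}\,e^{-r^{2}/4y}\,J_0\bigl(2\sqrt{by(t-y)}\bigr)\,dy,
\end{equation*}
and the analogous computation yields your formula for $K_2$. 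The subsequent estimates are routine and right: $|J_0|\le 1$ and the Gaussian normalization $\int_\Re e^{-r^2/4y}\,d\xi=2\sqrt{\pi\varepsilon y}$ give $\int_\Re|K_1|\,d\xi\le\int_0^t e^{-ay-\beta(t-y)}\,dy=E(t)$; then $E\ge 0$ and $\int_0^\infty E(\tau)\,d\tau=(a\beta)^{-1}=\beta_1$ give $(\ref{37})_2$; and $|J_1(z)|\le z/2$ together with $t-y\le t$ gives (\ref{38}). Your preliminary remark is also the right diagnosis of where the difficulty lies: pushing the crude bound $(\ref{33})_1$ through the convolution cannot work, since $E(t)$ is exactly the contribution of the heat part of $K_0$ alone, with no slack left for the $\sqrt b$-term, so the exact recombination via the Bessel identities is unavoidable. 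Two negligible blemishes: the inequality $e^{s}\ge 1+s$ announced in your closing list is never actually used, and (as in the paper's own definition of $E(t)$ in (\ref{35})) you tacitly assume $a\ne\beta$; for $a=\beta$ one reads $E(t)=t\,e^{-at}$ and all three estimates survive by passing to the limit.
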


%Moreover it is possible to prove that:

%\begin{equation} \label{39}
%\bigg | \frac{\partial K_0}{\partial r }\,\bigg | \leq \, \frac{r\,\,e^{- \frac{r^2}{4 t}\,}}{4\,t^{3/2}\sqrt{\pi \varepsilon }} \,\, [ \, e^{\,-\,at}\, +\,4 b\,t^2 \,\,e^{-\omega \,t}\,\,  ] 
%\end{equation} 

So that, in order to obtain   inverse formulae of (\ref{29}) and  (\ref{210}),        let  us apply (\ref{32}) to (\ref{26})(\ref{27}). Then,  one deduces the  following functions which are  similar to  {\em theta functions}:

\begin{equation}     \label{310}
\begin{split}
\theta (x,t) \,=\,&  K_0(x,t) \ +\, \sum_{n=1}^\infty \,\, \ [\, K_0(x \,+2nL,\,t) \, + \, K_0 ( x-2nL, \,t)\,] \,  \\&= \sum_{n=-\infty }^\infty \,\, \ K_0(x \,+2nL,\,t).
\end{split} 
\end{equation}

\begin{equation}     \label{3100}
\begin{split}
\theta^* (x,t) \,= \,2\,\sum_{n=-\infty }^\infty \,  K_0(x \,+4nL,\,t)\, - \,\sum_{n=-\infty }^\infty \,\,  K_0(x \,+2nL,\,t).
\end{split} 
\end{equation}
  
Some of the properties of function $ \theta(x,t) $ have  already been evaluated in \cite{dr13}. Precisely, denoting by    $C= 2 \varepsilon \,\,\pi^2 / (\, 6\,e  L ^2\,) $    and  letting

\begin{equation}   \label{311}
 C_0\, = \, \frac{1}{ 2 \sqrt{\varepsilon \, \omega }}\,+\,\frac{ b\,\,\omega^{-\, 3/2}}{4 \sqrt{ \varepsilon } \,\,|a-\beta|} \,\biggl[\,1\,\, +\, \dfrac{C}{b}\,|a-\beta|\,+ \frac{3 \,C}{2\, \omega }\,\,\biggr],\end{equation}

\noindent the  $\theta (x,t)$ function,  defined in $( \ref{310}), $ satisfies the following inequalities:

\begin{equation}               \label{312}
 \int_0^L |\theta (|x-\xi|,\,t)|\ \, d\xi \leq \,  ( 1\, +\, \sqrt b \,\pi \,t \, ) \,\,e^{- \omega \, t\,} 
\end{equation}

\begin{equation}               \label{313}
   \int_0^t\,d \tau\, \int_0^L |\theta (|x-\xi|,\,t)|\ \, d\xi \leq \,  \beta_0;\,\quad \quad \int_0^ \infty \, | \theta ( x,\tau )| \,\, d \tau  \,\, \leq \,\,C_0, 
\end{equation}

\vspace{3mm}\noindent and, it results:

\begin{equation}              \label{314}
\lim _{t \to \infty}  \theta ( x, t ) \,\, = \,\,0;\qquad\lim _{t \to \infty} \int _0^t \theta ( x, \tau ) \,\, d\tau \,\,= \frac{1}{2 \, \varepsilon \,\,\sigma _0\,  }\,\,\ \frac{\cosh \sigma_0 \,\,(L-x)}{\sinh\,(  \sigma_0 \, L). }
\end{equation}

\noindent  where $ \sigma_0 = \sqrt{\biggl(\,a\,\,+ \dfrac{b}{\beta}\biggr)\dfrac{1}{\varepsilon}}.$

\vspace{3mm} \noindent Furthermore, as for $ \frac{\partial \theta}{\partial x } $, from (\ref{31}), it is well-rendered  that the x derivative of the  integral term vanishes  for $ x\,\rightarrow 0 \,$, while the first term represents  the  derivative with respect to  $ x $ of  the fundamental solution related to the heat equation. So, by means of  classic theorems (see,f.i. \cite{c} p. 60),  conditions $(\ref{22})_3$ are surely  satisfied.  

\vspace{3mm}Moreover  one has:  
\begin{equation}              \label{317}
\begin{split}
\lim _{t \to \infty} \int _0^t \theta_x ( x, \tau ) \,\, d\tau \,\,= \frac{1}{2 \, \varepsilon \,  }\,\,\ \frac{\sinh \sigma_0 \,\,(x-L)}{\sinh\,(  \sigma_0 \, L) }\\  \\ \lim _{t \to \infty} \int _0^t \theta^*_x ( x, \tau ) \,\, d\tau \,\,= \,-\,\frac{1}{2 \, \varepsilon \,  }\,\,\ \frac{\cosh \sigma_0 \,\,(L-x)}{\cosh\,(  \sigma_0 \, L) }\end{split}
\end{equation}

\section{Asymptotic behaviour} \label{sec4}

When the source term $ F =f(x,t) $  is  a prefixed function depending only on $ x $ and $ t $, 
then,  initial boundary value problems (\ref{22}) (\ref{23}) are linear and  can be solved explicitly. Moreover, when    $\, F\, =\, F(x,t,u) \,$   depends also on the unknown function $ u(x,t), $  then these problems   admit  integral differential formulations and one has:

\vspace{3mm}$ \bullet $   Integro differential equation for  problem (\ref{22}) (DBC):

\begin{equation}   \label{41}
\begin{split}
 u(\, x,\,t\,)\, = \,\,\int^L_0 \, [\theta \,(|x-\xi|,\, t)\,- \theta (x+\xi,\,t)\,]\, \,u_0(\xi)\,\, d\xi \,\,- \,\\ \\
\,2 \, \varepsilon \,\int^t_0 \theta_x\, (x,\, t-\tau) \,\,\, g_1 (\tau )\,\,d\tau\,+\, 2\,\, \varepsilon \int^t_0 \theta_x\, (x-L,\, t-\tau) \,\,\, g_2 (\tau )\,\,d\tau\,
\\\\ +\,\int^t_0 d\tau\int^L_0 \, [\,\theta\, (|x-\xi|,\, t-\tau)- \theta (x+\xi,\,t-\tau )] \,\,\, F\,(\,\xi,\tau,\,u(x,\tau))\, \,\,d\xi.
 \end{split}
\end{equation}

 $ \bullet $   Integro differential equation for  (\ref{23}) (MBC):

\begin{equation}   \label{42}
\begin{split}
 u(\, x,\,t\,)\, = \,\,\int^L_0 \, [\theta^* \,(|x-\xi|,\, t)\,- \theta^* (x+\xi,\,t)\,]\, \,u_0(\xi)\,\, d\xi \,\,- \,\\ \\
\,2 \, \varepsilon \,\int^t_0 \theta^*_x\, (x,\, t-\tau) \,\,\, h_1 (\tau )\,\,d\tau\,+\, 2\,\, \varepsilon \int^t_0 \theta^*\, (L-x,\, t-\tau) \,\,\, h_2 (\tau )\,\,d\tau\,
\\\\ +\,\int^t_0 d\tau\int^L_0 \, [\,\theta^*\, (|x-\xi|,\, t-\tau)- \theta^* (x+\xi,\,t-\tau )] \,\,\, F\,(\,\xi,\tau,\,u(x,\tau))\, \,\,d\xi.
 \end{split}
\end{equation}

   \vspace{3mm} 
   Now, if  $ \,{\cal B}_ T \,  $ denotes the Banach space 

\begin{equation}   \label{43}
  \,{\cal B}_ T \, \equiv \, \,\bigg\{\, z\,(\,x,t\,) : \, z\, \in  C \,(\Omega_T),  \, \,\,   ||\,z\,|| \,= \displaystyle \sup _{ \Omega_T\,}\, | \, z \,(\,x,\,t) \,|, \,\, < \infty \bigg \}
\end{equation}

 \noindent and  $ D $  is the following set:  
 \[ D\equiv  \{(x,t,u) : (x,t) \in \Omega_T , -\infty < u <\infty\,  \]

\noindent then, let  assume the source term $ F(x,t,u) \, $  be defined and continuous on $ D $ and   uniformly Lipschitz continuous in $(x,t,u)$ for each compact subset of $ \Omega_T.  $  Besides, let $ F $  be a bounded function  for bounded $ u $ and there exists a constant $ C  $ such that:

\[ |F(x,t,u_1)-F(x,t,u_2)|  \leq\,\, C\,\, |u_1-u_2|.\]

 \noindent So, by means of standard methods related to integral equations  and owing to basic properties of $ K_0,$ it is  possible to  prove that  the mappings defined by (\ref{41}) (\ref{42})   are  a contraction  of $ {\cal B}_ T $ in $ {\cal B}_ T  $ and so they admits a unique fixed point   $ u(x,t)  \, \in {\cal B}_ T $. \cite{c,dmm}  

  \vspace{3mm} In order to enable a quicker reading, attention will be paid  only to the initial boundary  value problem with Dirichlet conditions. However,  all the  following analysis can be applied to the mixed problem,too.

\vspace{3mm} At first, let us consider $ g_i =0\,\,(i=1,2) $ and let

\[\,\,\,||\,u_0\,|| \,= \displaystyle \sup _{ 0\leq\,x\,\leq \,L\,}\, | \,u_0 \,(\,x\,) \,|, \,\,\qquad||\,F\,|| \,= \displaystyle \sup _{ \Omega_T\,}\, | \,F \,(\,x,\,t,\, u) \,|, \, \]. 

In \cite{dr8} the following theorem has been proved:

\begin{theorem}
  When $ g_i\,=\,0 \, \,\,\,(i=1,2),\, $   solution (\ref{41}),  for large $ t , $  verifies  the following estimate:

 \begin{equation}   \label{44}
|u(x,t)| \, \leq \,\, 2 \,\,\bigl[\,\,||\,F\,|| \,\, \beta_0\,+\, \,  \,\,||\,u_0\,|| \,\, ( 1\, +\, \sqrt b \,\pi \,t \, ) \,\,e^{- \omega \, t\,}\, \bigr]
\end{equation}

 \noindent  where $ \, \omega = \min \,(a, \beta )$ and  $ \beta_0\,$
is defined by $(\ref{35})_{2}$.
\end{theorem}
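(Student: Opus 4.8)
The plan is to estimate each term of the representation (\ref{41}) directly by the a priori bounds on $\theta$ assembled in Section \ref{sec3}, exploiting that $u$ is the fixed point already known to belong to $\mathcal{B}_T$, so that $|u_0(\xi)|\le\|u_0\|$ and $|F(\xi,\tau,u)|\le\|F\|$ are legitimate pointwise majorizations. First I would set $g_1=g_2=0$, which collapses (\ref{41}) to the initial-data term plus the source term. Taking absolute values and using the triangle inequality twice — once to separate the two summands of $u$, and once inside each bracket to split $\theta(|x-\xi|,\cdot)-\theta(x+\xi,\cdot)$ — yields
\begin{equation*}
|u(x,t)|\le \|u_0\|\int_0^L\bigl(|\theta(|x-\xi|,t)|+|\theta(x+\xi,t)|\bigr)\,d\xi+\|F\|\int_0^t d\tau\int_0^L\bigl(|\theta(|x-\xi|,t-\tau)|+|\theta(x+\xi,t-\tau)|\bigr)\,d\xi.
\end{equation*}

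Next I would apply (\ref{312}) to each of the two $\theta$-terms in the first integral, bounding it by $2\,\|u_0\|\,(1+\sqrt b\,\pi t)\,e^{-\omega t}$, and apply the first inequality of (\ref{313}) to each of the two $\theta$-terms in the double integral, bounding it by $2\,\|F\|\,\beta_0$. Adding the two contributions reproduces exactly the claimed estimate (\ref{44}); the factor $2$ in the statement is precisely the cost of the two images $\theta(|x-\xi|,\cdot)$ and $\theta(x+\xi,\cdot)$ generated by the Dirichlet reflection.

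The single point that needs care — and the only genuine obstacle — is that (\ref{312}) and (\ref{313}) are written for the diagonal argument $\theta(|x-\xi|,\cdot)$, whereas the reflected term carries the argument $x+\xi$. I would dispose of this using the series representation (\ref{310}): writing $\theta(y,t)=\sum_{n\in\mathbb Z}K_0(y+2nL,t)$, the substitution $y=x+\xi$ gives, by positivity (Tonelli) and the triangle inequality, $\int_0^L|\theta(x+\xi,t)|\,d\xi\le\sum_{n\in\mathbb Z}\int_{x+2nL}^{x+L+2nL}|K_0(z,t)|\,dz$, a sum of integrals of $|K_0|$ over disjoint windows of length $L$, hence at most $\int_{\mathbb R}|K_0(z,t)|\,dz$. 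By (\ref{33}) this is bounded by $e^{-at}+\sqrt b\,\pi t\,e^{-\omega t}\le(1+\sqrt b\,\pi t)\,e^{-\omega t}$, since $\omega=\min(a,\beta)\le a$ forces $e^{-at}\le e^{-\omega t}$; integrating this in $\tau$ and invoking (\ref{313}) likewise returns the $\beta_0$-bound for the reflected term. Thus both images obey (\ref{312})–(\ref{313}), the factor $2$ is justified, and because the $\beta_0$-bound in (\ref{313}) is uniform in $t$ while the initial-data contribution decays like $(1+\sqrt b\,\pi t)\,e^{-\omega t}$, the inequality exhibits the announced asymptotics: for large $t$ the influence of $u_0$ is exponentially negligible and only the source contribution $2\,\|F\|\,\beta_0$ survives.
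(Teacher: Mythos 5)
You cannot be faulted for reconstructing rather than matching the paper's argument: the paper never proves this theorem at all, it imports it from \cite{dr8} (``In \cite{dr8} the following theorem has been proved''). Your reconstruction follows what is clearly the intended route: set $g_1=g_2=0$ in (\ref{41}), majorize $|u_0|$ and $|F|$ by their sup-norms, bound the diagonal terms $\theta(|x-\xi|,\cdot)$ via (\ref{312}) and (\ref{313}), and transfer the same bounds to the reflected terms $\theta(x+\xi,\cdot)$ through the series (\ref{310}): the windows $[x+2nL,\,x+(2n+1)L]$, $n\in\mathbb{Z}$, are pairwise disjoint, so $\int_0^L|\theta(x+\xi,t)|\,d\xi\le\int_{\mathbb{R}}|K_0(z,t)|\,dz$, and $(\ref{33})_1$ together with $e^{-at}\le e^{-\omega t}$ gives the $(1+\sqrt b\,\pi t)\,e^{-\omega t}$ bound. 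This is correct, and the factor $2$ in (\ref{44}) is indeed exactly the cost of the two image terms.

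The one step you should repair is the $\beta_0$ bound for the reflected source term. As written (``integrating this in $\tau$ and invoking (\ref{313})'') it fails on both readings: (\ref{313}) is stated for the argument $|x-\xi|$, not $x+\xi$, so it cannot simply be invoked there; and if instead you literally integrate your pointwise-in-time bound $(1+\sqrt b\,\pi\tau)\,e^{-\omega\tau}$ over $\tau\in(0,t)$, you get at most $1/\omega+\sqrt b\,\pi/\omega^2$, which is in general strictly larger than $\beta_0$: for $a<\beta$ one has $\omega=a$, and $\sqrt b\,\pi/a^2>\pi\sqrt b\,(a+\beta)/(2(a\beta)^{3/2})$ because $2\beta^{3/2}/a^{1/2}>2\beta>a+\beta$. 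So that route does not recover the constant claimed in (\ref{44}). The fix is immediate and stays entirely inside your own argument: integrate the window inequality itself in time, obtaining $\int_0^t d\tau\int_0^L|\theta(x+\xi,\tau)|\,d\xi\le\int_0^t d\tau\int_{\mathbb{R}}|K_0(z,\tau)|\,dz\le\beta_0$, where the last inequality is $(\ref{33})_2$ rather than any property of $\theta$. With that single substitution your proof is complete and yields exactly (\ref{44}).
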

As for contributes of boundary data, the   well known  theorem   will be considered: 
\cite{b}

\begin {theorem}
 Let  $ h(t) $ and $ \chi(t) $  be two continuous functions on $ [0,\infty [.$   If they    satisfy the following  hypotheses 

\vspace{3mm} \noindent 
\begin{equation}  \label{hp}
\exists \,\, \displaystyle{\lim_{t \to \infty}}\chi(t) \, = \, \chi(\infty)\qquad\exists \,\, \displaystyle{\lim_{t \to \infty}}h(t) \, = \, h(\infty),
\end{equation}  
 
 \noindent 
\begin{equation} \label{hp2}
 \dot  h(t)\,  \in \, L_1  [ \,0, \infty),\end{equation}

 \noindent then, it results: 

\noindent 
 \begin{equation}      \label{47}
\lim_{t \to \infty} \,\, \int_o^t \,\chi(t-\tau ) \, \dot h ( \tau ) \, d \tau \,\, = \, \,\chi(\infty) \,\, [\,\,h(\infty) - h(0)\,\,].
\end{equation} 
\end{theorem}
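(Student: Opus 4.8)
The plan is to prove this by the standard $\varepsilon$-splitting argument used for Abelian-type theorems on convolutions. First I would rewrite the constant $h(\infty)-h(0)$ as an integral against $\dot h$: since $h$ is continuous and $\dot h \in L_1[0,\infty)$, the fundamental theorem of calculus gives $h(t)-h(0)=\int_0^t \dot h(\tau)\,d\tau$, and letting $t\to\infty$ is legitimate because $\dot h$ is integrable, so that $h(\infty)-h(0)=\int_0^\infty \dot h(\tau)\,d\tau$. This recasts the claimed identity as a comparison of two integrals taken against the same density $\dot h\,d\tau$, which is what makes the subtraction below clean.

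Next I would form the difference of the two sides and insert the limit value $\chi(\infty)$ inside the convolution:
\begin{equation}
\begin{split}
\int_0^t \chi(t-\tau)\,\dot h(\tau)\,d\tau - \chi(\infty)\,[h(\infty)-h(0)] = & \int_0^t [\,\chi(t-\tau)-\chi(\infty)\,]\,\dot h(\tau)\,d\tau \\ & -\, \chi(\infty)\int_t^\infty \dot h(\tau)\,d\tau .
\end{split}
\end{equation}
Because $\chi$ is continuous on $[0,\infty)$ and has a finite limit, it is bounded, say $|\chi|\le M$; hence the last term is at most $M\int_t^\infty|\dot h|\,d\tau$, which tends to $0$ as $t\to\infty$ by the $L_1$ hypothesis. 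The whole problem therefore reduces to showing that the first integral on the right vanishes in the limit.

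To treat that integral I would fix $\delta>0$ and choose $A$ so large that $\int_A^\infty|\dot h|\,d\tau<\delta$, then split the integration at $A$. On the upper piece $[A,t]$ the integrand is dominated by $2M\,|\dot h(\tau)|$, so that contribution is at most $2M\delta$, uniformly in $t$. On the fixed interval $[0,A]$ I would use that for $\tau\le A$ the argument satisfies $t-\tau\ge t-A\to+\infty$, so by the assumption $\chi(s)\to\chi(\infty)$ the factor $\chi(t-\tau)-\chi(\infty)$ becomes uniformly smaller than $\delta$ once $t$ is large enough; this bounds the lower piece by $\delta\,\|\dot h\|_{L_1[0,A]}$. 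Collecting the three estimates, the left-hand side is eventually bounded by a fixed multiple of $\delta$, and since $\delta$ is arbitrary the limit $(\ref{47})$ follows.

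The step I expect to be the main obstacle is the behaviour near $\tau=t$: there the argument $t-\tau$ of $\chi$ is close to zero, so $\chi(t-\tau)$ need not be near $\chi(\infty)$ and no pointwise cancellation with the limit value is available. The resolution is exactly the integrability hypothesis $\dot h\in L_1[0,\infty)$, which guarantees that the mass of $\dot h$ accumulated in that troublesome region is small, so the uncontrolled values of $\chi$ there are multiplied by a negligible factor. Everything else is routine, depending only on the boundedness of $\chi$ and on the uniform convergence $\chi(t-\tau)\to\chi(\infty)$ on the fixed finite interval $[0,A]$.
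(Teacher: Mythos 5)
Your proof is correct, but note that the paper itself does not prove this statement at all: it is quoted as a ``well known'' theorem with the proof outsourced to Berg's operational-calculus text \cite{b}, and it is then used as a black box to derive the asymptotic formula (\ref{48}) for the boundary contributions. What you have written is the standard self-contained Abelian-type argument: replace $h(\infty)-h(0)$ by $\int_0^\infty \dot h\,d\tau$, subtract, and split the resulting integral at a level $A$ chosen so that $\int_A^\infty|\dot h|\,d\tau<\delta$; boundedness of $\chi$ (which follows from continuity plus existence of $\chi(\infty)$) controls the piece over $[A,t]$, the uniform estimate $|\chi(t-\tau)-\chi(\infty)|<\delta$ for $\tau\le A$ and $t$ large controls the piece over $[0,A]$, and the leftover term $\chi(\infty)\int_t^\infty\dot h\,d\tau$ vanishes by integrability. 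All three estimates are carried out correctly, and your identification of the region $\tau\approx t$ --- where $\chi(t-\tau)$ is far from its limit and only the smallness of the tail mass of $\dot h$ saves the day --- is exactly the right diagnosis of where the hypothesis (\ref{hp2}) is really used. One point deserves a more careful phrase: the fundamental theorem of calculus in the form $h(t)-h(0)=\int_0^t\dot h(\tau)\,d\tau$ needs $h$ absolutely continuous (or differentiable everywhere with $\dot h\in L_1$), not merely continuous with an a.e.\ derivative in $L_1$ --- the Cantor function violates it, and with it the theorem itself would fail. Since the statement is untenable without that reading, this is a defect of the theorem's phrasing rather than of your argument, but a one-line remark to that effect would make your proof airtight. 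In exchange for a few lines of work, your version makes this step of the paper self-contained where it currently leans entirely on an external reference.
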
 

\noindent According to this,  it is  possible to state:

\begin{theorem} \label{theorem asintotico}
Let  $ g_ i  \,\,\ (i=1,2) \,\,$  be two  continuous functions  converging for $ t \rightarrow \, \infty . $  In this case one has:

\begin{equation}    \label{48}
\lim_{t \to \infty } \,\int_0^t \,\theta_x \,(x,\tau)\,\,\, g_i \,(t-\tau)\, \,d\,\tau \, = \, g_{i, \infty} \,\,\,\,\,  \frac{1}{2 \, \varepsilon  }\,\,\, \frac{\sinh \sigma_0  \,\,(x-L)}{\sinh\  \sigma_0  \, L }
\end{equation}

  \noindent where  $ \sigma_0 = \sqrt{\biggl(\,a\,\,+ \dfrac{b}{\beta}\biggr) \dfrac{1}{\varepsilon}}.$
\end{theorem}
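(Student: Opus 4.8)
The plan is to recognize the integral in (\ref{48}) as a convolution and to reduce the statement to the asymptotic convolution result, i.e.\ to formula (\ref{47}). First I would set
\[
\chi(\tau)\,=\,g_i(\tau),\qquad h(t)\,=\,\int_0^t\theta_x(x,\tau)\,d\tau,
\]
so that $\dot h(\tau)=\theta_x(x,\tau)$ and $h(0)=0$. After the change of variable $\tau\mapsto t-\tau$ the integral on the left of (\ref{48}) becomes
\[
\int_0^t \theta_x(x,\tau)\,g_i(t-\tau)\,d\tau\,=\,\int_0^t \chi(t-\tau)\,\dot h(\tau)\,d\tau,
\]
which is exactly the left-hand side of (\ref{47}). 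Hence, once the hypotheses (\ref{hp})--(\ref{hp2}) are verified, the limit will equal $\chi(\infty)\,[\,h(\infty)-h(0)\,]$, and it will only remain to identify these two constants.

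The first hypothesis in (\ref{hp}), concerning $\chi=g_i$, is granted by assumption: $g_i$ is continuous and convergent, so $\chi(\infty)=g_{i,\infty}$. The hypothesis on $h$ is furnished by the first limit in (\ref{317}), which guarantees that $h(t)$ tends, as $t\to\infty$, to $\tfrac{1}{2\varepsilon}\,\sinh\sigma_0(x-L)/\sinh\sigma_0 L$; combined with $h(0)=0$, this already identifies $h(\infty)-h(0)$ with the bracket appearing on the right of (\ref{48}).

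The hard part will be verifying (\ref{hp2}), namely that $\dot h=\theta_x(x,\cdot)$ lies in $L_1[0,\infty)$. I cannot simply invoke the existence of the limit in (\ref{317}), since that yields only conditional convergence of $\int_0^\infty\theta_x\,d\tau$, whereas here I need absolute integrability. The plan is to differentiate the series (\ref{310}), writing $\theta_x(x,t)=\sum_{n=-\infty}^{\infty}\partial_x K_0(x+2nL,t)$, and to bound $\int_0^\infty|\partial_x K_0(r,\tau)|\,d\tau$ term by term from the explicit representation (\ref{31}). The $x$-derivative of the leading Gaussian produces the usual factor of order $r/t$, which stays integrable in time thanks to the weight $e^{-a\tau}$, while the Bessel-integral term retains the decay $e^{-\omega\tau}$; the sum over $n$ then converges because of the Gaussian decay in $r=(x+2nL)/\sqrt\varepsilon$, precisely the mechanism that produced the finite constant $C_0$ in the absolute bound (\ref{313}) for $\theta$ itself. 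Establishing $\int_0^\infty|\theta_x(x,\tau)|\,d\tau<\infty$ is the main obstacle of the argument.

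Once (\ref{hp2}) is secured, (\ref{47}) applies verbatim and gives
\[
\lim_{t\to\infty}\int_0^t\theta_x(x,\tau)\,g_i(t-\tau)\,d\tau\,=\,g_{i,\infty}\,\frac{1}{2\varepsilon}\,\frac{\sinh\sigma_0(x-L)}{\sinh\sigma_0 L},
\]
which is the assertion (\ref{48}). The same scheme, with $\theta^*$ replacing $\theta$ and the second limit in (\ref{317}), would settle the corresponding statement for the mixed boundary problem.
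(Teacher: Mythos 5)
Your proposal is correct and takes essentially the same route as the paper: the paper's proof consists precisely of applying (\ref{47}) with $h=\int_0^t\theta_x(x,\tau)\,d\tau$ and $\chi=g_i$, and then concluding via $(\ref{317})_1$. The only difference is that you explicitly flag the verification of hypothesis (\ref{hp2}), i.e.\ $\theta_x(x,\cdot)\in L_1[0,\infty)$, as the remaining obstacle and sketch how to obtain it from (\ref{31}) and (\ref{310}); the paper's two-line proof passes over this point in silence, so on this step you are, if anything, more careful than the original.
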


 \begin{proof}
Let us  apply (\ref{47}) with  $ h= \int _0^t \theta_x(x,\tau) d\tau \, \,\,\mbox{and}\,\, \chi = g_i  \,\,(i=1,2) $. Then, (\ref{48}) follows by $(\ref{317})_1.$
 
\end{proof}

\section{An example: estimate for the FitzHugh Nagumo system }

When  $u(x,t) $ is determined, by means (\ref{15}), the $v(x,t)\,$ component is given by

\begin{equation}      \label{52}
v\,(x,t) \, =\,v_0 \, e^{\,-\,\beta\,t\,} \,+\, b\, \int_0^t\, e^{\,-\,\beta\,(\,t-\tau\,)}\,u(x,\tau) \, d\tau.
\end{equation}
To achieve the expression of the solution $ (u,v), $ let us denote with $\, f_1 \, \ast
f_2 \, $ the convolution

\[ f_1 ( \cdot, t) \ast \, f_2 ( \cdot ,t) \, = \int_0^t \, f_1 ( \cdot, t)  \, f_2 \,\,( \cdot , t -\tau)  \,d\,\tau. \]
\noindent   So that, referring to  Dirichlet  conditions, if    

\[G(x,\xi, t) \, = \,  \theta \,(\,|x-\xi|,\, t\,)\,- \,  \theta \,(\,x+\xi,\,t\,),
 \]

\noindent and  denoting by  $ N(x,t) $  the following known function depending on the data 
$( u_0, v_0, g_1,  g_2)$:

\begin{equation}    \label{53}
%\begin{split}
N(x,t)\, =\,
  -2 \,\varepsilon \, g_1 (t) \, \ast \, \theta_x (x,t) \,+
 \end{equation}
\[ + \, 2\, \varepsilon \, \,g_2 (t) \, \ast \, \theta_x ( x-L , t) \, 
+\,\int^L_0 \,  \,u_0\,(\xi)\,\, G( x, \xi,t) \, d\xi  \, - \, e^{\,-\, \beta\, t\, } \, \ast  \, \int^L_0 \,  v_0( \xi)\,\, G( x, \xi,t) \, d\xi \,,\]
%\end{split} 
%\end{equation}
 \noindent it results:

\begin{equation}      \label{54}
\begin{split}
v\,(x,t) \, = &\,\, v_0 \, e^{\,-\,\beta\,t\,} \,+\, b\,  e^{\,-\, \beta\, t\, } \, \ast  \, \,N(x,t)
\\ \\&  +\,b\,\,\, e^{\,-\, \beta\, t\, } \, \ast  \,\int _0^L \, G \, (\, x, \xi  , t-\tau) \,\, \ast \,\varphi\,[\,\xi,\,\tau,\,u(\xi, \tau)]\, \,]\}\,\,d\xi\,. 
\end{split}
\end{equation}

So, the  asymptotic effects due to initial disturbances are vanishing, while the effects of the source terms are bounded. Indeed, letting

\[||\,u_0\,|| \,= \displaystyle \sup _{ 0\leq\,x\,\leq \,L\,}\, | \,u_0 \,(\,x\,) \,|, \,\,\qquad ||\,v_0\,|| \,= \displaystyle \sup _{ 0\leq\,x\,\leq \,L\,}\, | \,v_0 \,(\,x\,) \,|,  \, \]

\noindent and 
\[  
||\,\varphi\,|| \,= \displaystyle \sup _{ \Omega_T\,}\, | \,\varphi \,(\,x,\,t,\,u) \,|,\]

 \noindent  by means  of  (\ref{51}) (\ref{41}) and (\ref{54}) and  owing to the estimates $(\ref{33})_1$,  (\ref{37}), (\ref{38}), the following theorem holds:

\begin{theorem}
For regular solution $ (u,v) $ of the (FHN) model, when  $g_1\,\,= g_2 \,= \,0,\, \, $  the following estimates hold:

\begin{equation}            \label{55}
\left\{ 
 \begin{array}{lll}                                                   
 \left| u \, \right| \, \leq  2\,[\,\left\| u_0 \right\| \, (1+\pi \sqrt b \, t ) \, e^ {\,-\omega\,t\,}\,+\,\left\| v_0 \right\|\,E(t) \, +\, \beta_0 \,\left\| \varphi \right\|\,] 
   \\
\\
\left| v \, \right| \, \leq  \left\| v_0 \right\|\, e^ {\,-\,\beta\,t\,}\,+\,2\,[\,b\,(\,\left\| u_0 \right\|\,+\, t\, \left\| v_0 \right\|\,) \, E(t) \, + \, b\, \beta_1\, \left\| \varphi \right\| \,]
\\ 
   \end{array}
  \right.
 \end{equation}
\end{theorem}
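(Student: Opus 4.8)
The plan is to read off both bounds directly from the explicit representations (\ref{41}) and (\ref{54}), the only structural fact needed being that a time-convolution against $e^{-\beta t}$ turns the kernel $K_0$ into the iterated kernel $K_1$, and a double such convolution turns it into $K_2$, in the sense of (\ref{36}); the resulting $L^1$-bounds are then supplied by (\ref{37}) and (\ref{38}). Throughout I write $G(x,\xi,t)=\theta(|x-\xi|,t)-\theta(x+\xi,t)$ and I use that each of its two theta-terms separately satisfies the estimates (\ref{312}), (\ref{313}), so that replacing $\theta$ by $K_1$ (resp. $K_2$) in a term produces, after the spatial integral over $[0,L]$, a factor $2$ times the corresponding real-line estimate (\ref{37}) (resp. (\ref{38})).

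For (\ref{55})$_1$ I would set $g_1=g_2=0$ in (\ref{41}) and insert $F=\varphi-v_0\,e^{-\beta t}$ from (\ref{51}), splitting $u$ into three contributions. Taking absolute values, the initial-datum term $\int_0^L u_0(\xi)\,G\,d\xi$ is bounded by $\|u_0\|\int_0^L|G|\,d\xi\le 2\|u_0\|(1+\sqrt b\,\pi t)e^{-\omega t}$ via (\ref{312}); the reaction term $\int_0^t d\tau\int_0^L G\,\varphi\,d\xi$ is bounded by $\|\varphi\|\int_0^t d\tau\int_0^L|G|\,d\xi\le 2\beta_0\|\varphi\|$ via (\ref{313})$_1$; and the term carrying $-v_0 e^{-\beta\tau}$ is the time-convolution $e^{-\beta\cdot}\ast(\int_0^L v_0\,G\,d\xi)$, which by (\ref{36}) replaces each theta by its $K_1$-analogue and is therefore bounded by $2\|v_0\|E(t)$ via (\ref{37})$_1$. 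Adding the three pieces gives exactly (\ref{55})$_1$.

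For (\ref{55})$_2$ I would start from (\ref{54}) with $N$ as in (\ref{53}) and $g_1=g_2=0$, i.e. $N=\int_0^L u_0\,G\,d\xi-e^{-\beta\cdot}\ast\int_0^L v_0\,G\,d\xi$. The free term contributes $\|v_0\|e^{-\beta t}$. In $b\,e^{-\beta\cdot}\ast N$ the first summand is $b\,e^{-\beta\cdot}\ast\int_0^L u_0\,G\,d\xi$, bounded by $2b\|u_0\|E(t)$ by (\ref{37})$_1$ (one convolution, hence $K_1$); the second carries a double convolution $e^{-\beta\cdot}\ast e^{-\beta\cdot}$, which by (\ref{36}) produces the $K_2$-analogue of $\theta$ and is bounded by $2b\|v_0\|\,t\,E(t)$ by (\ref{38}). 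The nonlinear contribution $b\,e^{-\beta\cdot}\ast(\int_0^L G\ast\varphi\,d\xi)$ equals, after exchanging the convolutions, $b\int_0^L (e^{-\beta\cdot}\ast G)\ast\varphi\,d\xi$, so it is bounded by $b\|\varphi\|\int_0^t d\tau\int_0^L|G_1|\,d\xi\le 2b\beta_1\|\varphi\|$ by (\ref{37})$_2$, where $G_1$ denotes the $K_1$-analogue of $G$. Collecting the four terms yields (\ref{55})$_2$.

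The genuinely delicate point — and the step I expect to be the main obstacle — is the bookkeeping of these iterated $e^{-\beta t}$-convolutions: one must check that convolving the series (\ref{310}) termwise against $e^{-\beta t}$ (once or twice) indeed reproduces the periodized $K_1$- and $K_2$-kernels, and that the real-line estimates (\ref{37})--(\ref{38}) legitimately transfer to the $[0,L]$ integrals of the resulting theta-type functions, with the factor $2$ coming from the two reflected theta-terms in $G$. Everything else is the routine passage of the modulus inside the integrals; it is also worth noting, for consistency of the constants, that $\int_0^\infty E(\tau)\,d\tau=(a\beta)^{-1}=\beta_1$, which is precisely the constant to which the $\varphi$-estimates reduce.
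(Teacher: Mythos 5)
Your proposal is correct and follows essentially the same route as the paper, which proves the theorem precisely by inserting (\ref{51}) into the representation (\ref{41}) for $u$ and into (\ref{54}) for $v$, and then invoking the estimates $(\ref{33})_1$ (equivalently its $[0,L]$ versions (\ref{312})--(\ref{313})), (\ref{37}) and (\ref{38}), with the single and double $e^{-\beta t}$-convolutions converting $K_0$ into $K_1$ and $K_2$ exactly as you describe. Your bookkeeping of the factor $2$ from the two reflected theta terms and the identification $\int_0^\infty E(\tau)\,d\tau=(a\beta)^{-1}=\beta_1$ are consistent with the constants in (\ref{55}), so nothing is missing.
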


\noindent As for the asymptotic effects of boundary perturbations $ g_1, \, g_2 \,  $   by means of  (\ref{48}), when $ u_0 =0 $ and $ F=0, $ one has  

\begin{equation}            \label{56}
\left\{ 
 \begin{array}{lll}                                                   
 u  \, = g_{1,\infty}\,\,\frac{\sinh \sigma_0  \,\,(L-x)}{\sinh\  \sigma_0  \, L }  + g_{2,\infty}\,\,\frac{\sinh \sigma_0  \, x}{\sinh\  \sigma_0  \, L } \big| 
   \\
\\
 v \, =  \, \,\dfrac{b}{\beta} \,\,\bigg[\,\,g_{1, \infty} \,\,\,  \,\, \frac{\sinh \sigma_0  \,\,(L-x)}{\sinh\  \sigma_0  \, L } \,\,+ \,\,g_{2, \infty} \,\,\,  \,\, \frac{\sinh \sigma_0  \,\,(x)}{\sinh\  \sigma_0  \, L } \,\bigg].
\\ 
   \end{array}
  \right.
 \end{equation}

\section{Remarks}

%begin{itemize}
%\item 
\hspace{5mm}$ \bullet $ The paper is concerned with  the nonlinear integral equation (\ref{11}) whose kernel is a Green function with  numerous basic properties typical of the diffusion equation.
%\item  

$ \bullet $ Neumann, Dirichlet and mixed boundary conditions are considered, and   integro differential formulations  of  {\em non linear} problems are obtained.
%\item 

$ \bullet $ The asymptotic behavior for initial boundary value problem with  Dirichlet conditions is  evaluated, showing that effects  due to initial disturbances vanish, while the influences of the source term  and boundary perturbations are everywhere  bounded.
%\item  

$ \bullet $The analysis  related to  Dirichlet conditions can be applied to  mixed problem, too. Indeed, like $ \theta(x,t), $ also  the Green function  $ \theta^*(x,t)  $ defined in (\ref{27})  depends on   the fundamental solution  $ K_0. $
%\item 

$ \bullet $ The equivalence among equation   (\ref{11})  and  numerous models   allow us to apply asymptotic theorems to many  other problems related to various physical fields. 
% \end{itemize}

\section*{Acknowledgment}
This  work has been performed under the auspices of G.N.F.M. of I.N.d.A.M. and of Programma F.A.R.O. (Finanziamenti per l' Avvio di  Ricerche
Originali, III tornata) ``Controllo e stabilita' di processi diffusivi nell'ambiente'', Polo delle Scienze e Tecnologie, Universita' degli Studi di Napoli Federico II  (2012). 

\small

\label{Sample_NDST: LastPage}

\end{document}